\icmltitlerunning{Unwrapping ADMM: Efficient Distributed Computing via Transpose Reduction}
\begin{document} 

\twocolumn[
\icmltitle{Unwrapping ADMM: Efficient Distributed Computing via
Transpose Reduction}

\icmlauthor{Thomas Goldstein}{tomg@cs.umd.edu}
\icmladdress{University of Maryland, College Park}
\icmlauthor{Gavin Taylor}{taylor@usna.edu}
\icmlauthor{Kawika Barabin}{}
\icmlauthor{Kent Sayre}{}
\icmladdress{United States Naval Academy, Annapolis }

\icmlkeywords{ADMM, consensus, unwrapped, transpose reduction, support vector machine, lasso}

\vskip 0.3in
]

 \sloppy  

\begin{abstract} 
 Recent approaches to distributed model fitting rely heavily on consensus
 ADMM, where each node solves small sub-problems using only local data.  We
 propose iterative methods that solve {\em global} sub-problems over an
 entire distributed dataset.  This is possible using transpose reduction
 strategies that allow a single node to solve least-squares over massive
 datasets without putting all the data in one place.  This results in simple
 iterative methods that avoid the expensive inner loops required for consensus
 methods.   To demonstrate the efficiency of this approach, we fit linear
 classifiers and sparse linear models to datasets over 5 Tb in size using a
 distributed implementation with over 7000 cores in far less time than
 previous approaches.
\end{abstract}

\section{Introduction}

We study optimization routines for problems of the form
\eqn{single_term}{ \minimize f(Dx),}
where $D\in \reals^{m\times n}$ is a (large) data matrix and $f$ is a
convex function.  We are particularly interested in the case that $D$ is
stored in a distributed way across $N$ nodes of a network or cluster.  In this
case, the matrix $D = (D_1^T, D_2^T,\cdots,D_N^T)^T$ is a vertical stack of
sub-matrices, each of which is stored on a node.  If the function $f$
decomposes across nodes as well, then problem  \eqref{single_term} takes the
form
\eqn{mult_term}{ \minimize \sum_{i \le N}f_i(D_ix),}
where the summation is over the $N$ nodes.  Problems of this form include
logistic regression, support vector machines, lasso, and virtually all
generalized linear models \cite{HH01}.
 
Most distributed solvers for equation \eqref{mult_term}, such as ADMM, are
built on the assumption that one cannot solve global optimization problems
involving the entire matrix $D.$  Rather, each node alternates between solving
small sub-problems involving only local data, and then exchanging information
with other nodes.
 
This work considers methods that solve {\em global} optimization problems over
the entire distributed dataset on each iteration.  This is possible using {\em
transpose reduction methods}.  Such schemes exploit the following simple
observation: when $D$ has many more rows than columns, the matrix $D^TD$ is
considerably smaller than $D.$  The availability of $D^TD$ enables a single
node to solve least-squares problems involving the entire data matrix $D.$
Furthermore, in many applications it is possible (and efficient) to compute
$D^TD$ in a distributed way. This allows our approach to solve extremely large
optimization problems much faster than the current state-of-the art.  We
support this conclusion both with theoretical results in Section
\ref{sec:theory} and with experimental results in Section
\ref{sec:experiments}.

\section{Related Work}
Simple first-order methods (methods related to gradient descent) have been
available for smooth minimization problems for decades  \cite{TBA86,RN04}.
However, slow performance of gradient schemes for poorly conditioned problems
and slow (sublinear) convergence of sub-gradient and stochastic methods for
non-differentiable problems has led many data scientists to consider splitting
schemes.

Much recent work on solvers for formulation \eqref{mult_term} has focused on the
Alternating Direction Method of Multipliers \cite{GM75,GL89,GO08}, which has
become a staple of the distributed computing and image processing literature.
The authors of \cite{BPCPE10} propose using ADMM for distributed model fitting
using the ``consensus'' formulation.    Consensus ADMM has additionally been
studied for distributed model fitting \cite{EZDV11}, support vector machines
\cite{FCG10}, and numerous domain-specific applications \cite{CPB08, HTY12}.
Many variations of ADMM have subsequently been proposed, including specialized
variants for decentralized systems \cite{MXAP13}, asynchronous updates
\cite{ZK14, OHTG13},  inexact solutions to subproblems \cite{CHW15}, and
online/stochastic updates \cite{OHTG13}.

\section{Background}
The alternating direction method of multipliers (ADMM) is a general method for
solving the problem
\aln{\splt{ \label{two_term}
 \minimize& g(x)+h(y)   \\ 
 \st &Ax+By=0.
}}
The ADMM enables each term of problem \eqref{two_term} to be addressed
separately.  The algorithm in its simplest form begins with estimated
solutions $x^0,$ $z^0,$ and a Lagrange multiplier $\lambda^0.$  ADMM then
generates the following iterates:
\aln{\splt{ \label{admm}
 	x\kp &= \argmin_x g(x)+ \tot\|Ax+By^k+ \lambda^k  \|^2\\
  y\kp &= \argmin_y h(y)+ \tot\|Ax\kp+By+ \lambda^k  \|^2\\
  \lambda\kp & = \lambda^k+Ax\kp+By\kp
}}

   with some positive stepsize parameter $\tau>0.$
   
Disparate formulations are achieved by different $A,$ $B,$ $f,$ and $g$.  For example, consensus ADMM \cite{BPCPE10} addresses the problem 
\aln{
  \splt{ \label{many_term}
  \minimize& \sum_i f_i(x_i)   \\ 
  \st & x_i = z \text{ for all } i
  }
}
which corresponds to problem \eqref{two_term} with $B=(I,I,\cdots I)^T,$
$A=I,$ $f(x) =   \sum_i f_i(x_i),$ and $g=0.$  Rather than solving a single
global problem, the consensus ADMM solves many small problems in parallel.  On
each iteration of consensus ADMM, node $i$ performs the update
\eqn{con_ls}{x_i\kp = \argmin_{x_i} f_i(x_i)+\tot\|x_i-z\|^2.}
The shared global variable $z$ is then updated by the central server. Finally, by updating the Lagrange multipliers $\{\lambda_i\}$, the solutions to each sub-problem are forced to be progressively more similar on each iteration.

\section{Transpose Reduction Made Easy: Regularized Least-Squares}
\label{sec:lasso}
Transpose reduction is most easily described for regularized least-squares
problems.  Consider the distributed solution of the problem 
\eqn{rls}{\minimize J(x)+\half\|Dx-b\|^2}
for some penalty term $J.$  When $J(x) = \mu |x|$ for some scalar $\mu,$ this
becomes the lasso regression \cite{Tibshirani94}.   Typical consensus solvers
for problem \eqref{rls} require each node to compute the solution to
equation \eqref{con_ls}, which is here given by
\mlts{
 x_i\kp = \argmin_{x_i} \half\|D_ix_i-b_i\|^2+\tot\|x_i-z^k\|^2 \\
        = (D_i^TD_i+\tau I)\inv(D_i^Tn+\tau z).
}
During the setup phase for consensus ADMM, each node forms the matrix
$D^T_iD_i,$ and then computes and caches the inverse (or equivalently the
factorization) of $(D_i^TD_i+\tau I).$ 

Alternatively, transpose reduction can solve distributed ADMM on a {\em
single} machine without moving the entire matrix $D$ to one place.  Using the
simple identity 
\begin{align*}
  \half\|Dx-b\|^2 =& \half\la Dx-b,Dx-b\ra \\
  =& \half x^T(D^TD)x-x^TD^Tb+\half \|b\|^2 
\end{align*}
we can replace problem \eqref{rls} with the equivalent problem
\eqn{rls2}{\minimize J(x)+\half x^T(D^TD)x-x^TD^Tb.}
To solve problem \eqref{rls2}, the central server needs only the matrix
$D^TD$ and the vector $D^Tb.$  When $D$ is a large ``tall'' matrix $D\in
\reals^{m\times n},$ with $n\ll m,$ $D^TD$ has only $n^2$ (rather than $nm$)
entries, which is practical to store on a single server.  Furthermore, because
$$D^TD = \sum_i D_i^TD_i, \text{ and } D^Tb = \sum_i D^T_ib_i$$
the matrix $D^TD$ can be formed by having each server compute is own local
$D_i^TD_i,$ and then aggregating/reducing the results on a central server.
  
Once $D^TD$ and $D^Tb$ have been computed in the cloud and cached on a central
server, the global problem is solved on a {\em single} node.  This is done using
either a single-node ADMM method for small dense lasso (see \cite{BPCPE10}
Section 6.4) or a forward-backward (proximal) splitting method
\cite{GSB:2014}.  The latter approach only requires the gradient of $\half
x^T(D^TD)x-x^TD^Tb,$ which is given by $D^TDx-D^Tb.$
 
\section{Unwrapping ADMM:  Transpose Reduction for General Problems}
Transpose reduction can be exploited for more complex problems using
ADMM.  On each iteration of the proposed method, least-squares problems are
solved over the entire distributed dataset.   In contrast, each step of consensus ADMM relies
on sub-problems involving only small subsets of the data.    
 
We aim to solve problem \eqref{single_term}.  We begin by ``unwrapping'' the
matrix $D$; we remove it from $f$ using the formulation 
\aln{
  \splt{
    \label{split}
    \minimize&  f(y)   \\ 
    \st &  y= Dx. 
  }
}
Applying the ADMM with $A=D,$ $B=-I,$ $h=f,$ and $g=0$ yields Algorithm
\ref{alg:unwrap}.
\begin{algorithm}
  \begin{algorithmic}[1]
    \STATE Choose initial $x^0,\,y^0,\,\lambda^0,$ and $\tau>0$
    \WHILE {not converged}
      \STATE $x\kp = \argmin_x \|Dx-y^k+\lambda^k \|^2 = D^+(y^k-\lambda^k)$ 
    \STATE $y\kp = \argmin_y f(y)+\tot\|Dx\kp-y+\lambda^k \|^2$ 
    \STATE $\lambda\kp = \lambda^k+Dx\kp-y\kp$
    \ENDWHILE
  \end{algorithmic} 
  \caption{Unwrapped ADMM}  
  \label{alg:unwrap}
\end{algorithm}

We will now examine the steps in Algorithm \ref{alg:unwrap}.  The $x$ update
requires the solution of a global least squares problem over the {\em entire}
dataset.  The solution requires the pseudoinverse of $D,$ given by
$$D^+=(D^TD)^{-1}D^T.$$   

The $y$ update can be represented using the {\em proximal} mapping of $f,$
which is given by $$\prox_f(z, \delta) =  \argmin_y f(y)+\frac{1}{2\delta}\|y-
z\|^2.$$ Using this notation,  Line 4 of Algorithm \ref{alg:unwrap} is written
$\prox_f(Dx\kp+\lambda^k, \tau\inv).$ Provided $f$ is decomposable, the
minimization in Line 4 is coordinate-wise decoupled.  Each coordinate of
$y\kp$ is computed with either an analytical solution, or using a simple
1-dimensional lookup table of pre-computed solutions.

\subsection{Distributed Implementation}
When $D = (D_1^T, D_2^T,\cdots,D_N^T)^T$  is large and distributed over $N$
nodes, no node has access to the entire matrix $D$ to solve the global least
squares problem for $x\kp$.  This is where we exploit the transpose reduction.
    
We can decompose the rows of $y=(y_1^T,y_2^T, \cdots, y_N^T)^T$ and
$\lambda=(\lambda_1^T,\lambda_2^T, \cdots, \lambda_N^T)^T.$ The constraint in
formulation \eqref{split} now becomes $y_i=D_ix,$ and the least-squares $x$
update in Algorithm \ref{alg:unwrap} becomes    
\eqn{ls}{x\kp = D^+(y^k-\lambda^k) = W\sum_i D_i(y^k_i-\lambda^k_i),}
where $W = (\sum_i D_i^TD_i)\inv.$ Each vector $D_i(y^k_i-\lambda^k_i)$ can be
computed locally on node $i$.  Multiplication by $W$ must take place on a
central server.
    
Note the massive dimensionality reduction that takes place when $D^TD=\sum_i
D_i^TD_i$ is formed. For a data matrix $D\in \reals^{m\times n}$ with $n\ll
m$,  the Gram matrix $D^TD\in \reals^{n\times n}$ is  small, even in
the case that $D$ is far too large to store on the central server.   The complete distributed method is listed in Algorithm \ref{alg:unwrap_dist}.

\begin{algorithm}
  \begin{algorithmic}[1]
    \STATE Choose global $x^0,\, \tau,$ and local $\{y^0_i\},\,\{\lambda^0_i\}$ on each node $i$ 
    \STATE All nodes:  $W_i = D_i^TD_i,$ $\forall i$ 
    \STATE Central node:  $W= (\sum_i W_i)\inv$ 
    \WHILE {not converged}
      \STATE All nodes:  $d_i^k = D_i^T(y^k_i-\lambda^k_i), $ $\forall i$  
      \STATE Central Node:  $x\kp = W\sum_i d_i^k$
      \STATE All nodes:\\
        \begin{varwidth}[t]{\linewidth}
         $ y\kp_i = \argmin_{y_i} f_i(y_i)+\tot\|D_ix\kp-y_i+\lambda^k_i \|^2 $\par
               \hskip.81cm $=\prox_{f_i}(D_ix\kp+\lambda^k_i,\tau\inv),\forall i$
        \end{varwidth}
      \STATE All nodes:  $\lambda\kp_i = \lambda^k_i+D_ix\kp-y\kp_i$
    \ENDWHILE
  \end{algorithmic} 
  \caption{Unwrapped ADMM with Transpose Reduction}  
  \label{alg:unwrap_dist}
\end{algorithm}

\section{Application: Fitting linear classifiers}

\subsection{Logistic regression}
A logistic classifier maps a feature vector $d$ to probabilities using a
mapping of the form $d \mapsto \frac{e^{x\cdot d}}{1+e^{x\cdot d}}$ for some
row vector $x$ of weights.  Given a matrix $D\in \reals^{m\times n}$ of
feature vectors, and a vector $l\in \reals^m$ of labels, the regression
weights are found by solving
\vspace{-1mm} 
\mlt{
\minimize_x \sum_{k=1}^m \log(1+\exp(-l_k(D_kx)))  \\= f_{lr}(Dx),  \label{log}
}

where $ f_{lr}$ is the logistic regression loss (negative
log-likelihood) function
$$f_{lr}(z) = \sum_{k=1}^m \log(1+\exp(-l_kz_k)).$$ 
   
To apply unwrapped ADMM, we must evaluate the proximal
operator $$\prox_{f_{lr}}(z, \tau\inv) =  \argmin_y f_{lr}(z)+\tot\|y-
z\|^2.$$ This is a convex, 1-dimensional optimization problem, and the
solution is easily computed with a few iterations of Newton's method.
However, because the proximal operator depends on only one input variable, we
suggest forming a lookup table of pre-computed solutions. 
   
\subsection{Support Vector Machine}
A common formulation of the support vector machine (SVM) solves
  \eqn{svm}{\minimize \half \|x\|^2 + Ch(Dx)}
where $C$ is a regularization parameter, and $h$ is a simple ``hinge loss''
function given by
$$h(z)  =   \sum_{k=1}^M \max\{1-l_kz_k,0\}.$$ 
The proximal mapping of $h$ has the form
  $$\prox_h(z,\delta)_k = z_k+l_k\max\{\min\{1-l_kz_k,\delta\},0\}.$$
Using this proximal operator, the solution to the $y$ update in Algorithm \ref{alg:unwrap} is simply 
  $$y\kp = \prox_h\left( Dx\kp+\lambda^k,\frac{C}{\tau}\right).$$
  Note that this algorithm is much simpler than the consensus implementation of SVM, which requires each node to solve the sub-problem 
  \eqn{svm_prox}{\minimize Ch(Dx)+\tot\|x-y\|^2.}
  Despite the similarity of this problem to the original SVM \eqref{svm}, this problem form is {\em not} supported by available SVM solvers such as LIBSVM \cite{CL11} and others, and thus requires a custom solver (see Section \ref{sec:svm_dual} in the Appendix).

\section{Handling Sparsity}
Variable-selection methods use $\ell_1$ regularization to obtain sparse
solutions. For high dimensional problems, $\ell_1$ regularization is commonplace to avoid
overfitting.  Sparse model fitting problems have the form
\eqn{l1_reg}{ \minimize \mu |x| + f(Dx)}
for some regularization parameter $\mu>0.$  Sparse problems can be trivially reduced to the form \eqref{single_term} by defining 
     $$ \widehat D =  \mat{I\\D}
     ,  \qquad\hat f(z)_k = \css{
            \mu |z_k|, \text{ for } 1\le k \le n\\
            f_k(z_k) , \text{ for } k > n
          }$$
  and then minimizing  $\hat f(\widehat D x).$  In the distributed setting where $f$ and $D$ are computed/stored across $N$ servers, the problem now has the form \eqref{mult_term} with
  \eqn{mult_term_sparse}{ \minimize \sum_{i \le {N+1}}f_i(D_ix)}
  where $f_{N+1}(z) = \mu|z|,$ and $D_{N+1}  = I.$
  
  \subsection{Splitting Over Columns}
When the matrix $D$ is extremely wide ($m \ll n$), it often happens that each
server stores a subset of columns of $D$ rather than rows.  Fortunately, such
problems can be handled by solving the {\em dual} of the original problem.
The dual of the sparse problem \eqref{l1_reg} is given by 
\eqn{single_dual}{\minimize_\alpha f^*(\alpha) \st \|D^T\alpha\|_\infty \le \mu}
where $f^*$ denotes the Fenchel conjugate \cite{BV04} of $f.$  For example the
dual of the lasso problem is simply
$$\minimize_\alpha \half\|\alpha+b\|^2 \st \|D^T\alpha\|_\infty \le \mu.$$
Problem \eqref{single_dual} then reduces to the form \eqref{single_term} with
$$ \widehat D =  \left(\begin{array}{c}I\\D^T\end{array}\right),
  \,\, \hat f(z)_k =\css{
             \half\|z_k+b_k\|^2, \text{ for } 1\le k \le m\\
            \mathcal{X}(z_k) , \text{ for } k > m
          }$$
where  $ \mathcal{X}(z)$ is the characteristic function of the $\ell_\infty$ ball of radius $\mu.$  The function $ \mathcal{X}(z)$ is infinite when $|z_i|>\mu$ for some $i,$ and zero otherwise.
The unwrapped ADMM for this problem requires the formation of $D_iD^T_i$ on
each server, rather than $D^T_iD_i.$

\section{Convergence Theory}
\label{sec:theory}
The convergence of ADMM is in general well understood.  Classical results guarantee convergence using monotone operator theory \cite{EB92}, however they do not provide convergence rates.  A more recent result due to He and Yuan \cite{HY12:admm}, provides a global convergence rate for the method.  Theorem \ref{thrm} below presents this result, as adapted to the problem form \eqref{two_term}.

\begin{theorem} (He and Yuan)  Consider the iterates generated by the ADMM method \ref{admm}.  If $f$ and $g$ are proper convex functions, then 
\mlt{
 \|B(y\kp-y^k)\|^2+\|Ax\kp+By\kp\|^2  \\ 
 \le  \frac{\|B(y^0-y\opt)\|^2+\|\lambda^0-\lambda\opt\|^2}{k+1}
}
where $y\opt$ is an optimal solution for \eqref{two_term} and $\lambda\opt$ is the corresponding optimal Lagrange multiplier.
\label{thrm} 
\end{theorem}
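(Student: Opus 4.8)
\noindent\emph{Proof sketch.} The plan is to reproduce the variational-inequality (VI) argument of He and Yuan. First I would encode the optimality conditions of \eqref{two_term} as a monotone VI. Writing $u=(x,y)$, $w=(x,y,\lambda)$, $\theta(u)=g(x)+h(y)$, and $F(w)=(-A^T\lambda,\ -B^T\lambda,\ Ax+By)$, a point $w\opt=(x\opt,y\opt,\lambda\opt)$ solves \eqref{two_term} (with $\lambda\opt$ the optimal multiplier) if and only if $\theta(u)-\theta(u\opt)+\la w-w\opt,\,F(w\opt)\ra\ge 0$ for all $w$. The structural fact that does the work is that $F$ is affine with a \emph{skew-symmetric} linear part, so $F$ is monotone and $\la w-w',\,F(w)\ra=\la w-w',\,F(w')\ra$ for all $w,w'$; this will be used to replace $F(w\opt)$ by $F$ evaluated at the iterates.

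Second --- the heart of the argument --- I would combine the first-order optimality conditions of the $x\kp$- and $y\kp$-subproblems of \eqref{admm} with the recursion $\lambda\kp=\lambda^k+Ax\kp+By\kp$ (used to rewrite the subproblem gradients, e.g.\ via $Ax\kp+By^k+\lambda^k=\lambda\kp+B(y^k-y\kp)$) into an inexact VI satisfied by $w\kp$:
\[
\theta(u)-\theta(u\kp)+\la w-w\kp,\,F(w\kp)\ra\ \ge\ \la v-v\kp,\,M(v^k-v\kp)\ra \quad\text{for all }w,
\]
where $v=(By,\lambda)$ collects the quantities that recur and $M$ is a fixed positive semidefinite matrix built from the stepsize $\tau$ and the identity. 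Testing this at $w=w\opt$, adding the optimum VI of the first step evaluated at $w=w\kp$, and cancelling the two $F$-terms by skew-symmetry leaves $\la v\kp-v\opt,\,M(v^k-v\kp)\ra\ge 0$. Combined with the polarization identity $\|v^k-v\opt\|_M^2-\|v\kp-v\opt\|_M^2=\|v^k-v\kp\|_M^2+2\la v^k-v\kp,\,M(v\kp-v\opt)\ra$, this yields the contraction
\[
\|v\kp-v\opt\|_M^2\ \le\ \|v^k-v\opt\|_M^2-\|v^k-v\kp\|_M^2 .
\]
Since $\lambda\kp-\lambda^k=Ax\kp+By\kp$, the deficit $\|v^k-v\kp\|_M^2$ is exactly $\|B(y\kp-y^k)\|^2+\|Ax\kp+By\kp\|^2$ under the normalization of $M$ used in the statement, while $\|v^0-v\opt\|_M^2=\|B(y^0-y\opt)\|^2+\|\lambda^0-\lambda\opt\|^2$. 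Telescoping from $k=0$ to $K$ then gives $\sum_{k=0}^{K}\big(\|B(y\kp-y^k)\|^2+\|Ax\kp+By\kp\|^2\big)\le\|B(y^0-y\opt)\|^2+\|\lambda^0-\lambda\opt\|^2$.

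Third, to convert this summable bound into a per-iteration rate I would prove that the residual $r^k:=\|B(y\kp-y^k)\|^2+\|Ax\kp+By\kp\|^2$ is monotonically non-increasing in $k$; granting this, $(K+1)\,r^K\le\sum_{k=0}^{K}r^k\le\|B(y^0-y\opt)\|^2+\|\lambda^0-\lambda\opt\|^2$, which is the asserted inequality after relabeling $K$ as $k$. I expect the monotonicity of $r^k$ to be the main obstacle: it does not follow from the contraction above, and the standard route is to write the inexact VI at step $k$ tested at $w=w\kp$ and the one at step $k-1$ tested at $w=w^k$, add them, again invoke monotonicity of the subdifferentials of $g$ and $h$ together with the skew-symmetry of $F$, and push the result through some $M$-norm algebra to obtain $\|v^k-v\kp\|_M^2\le\|v^{k-1}-v^k\|_M^2$, i.e.\ $r^k\le r^{k-1}$. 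The remaining loose end is the bookkeeping of the scaling convention relating $\lambda$ to the true Lagrange multiplier and the $\tau$-dependence inside $M$ --- checking that $\|v^0-v\opt\|_M^2$ and $\|v^k-v\kp\|_M^2$ collapse to the expressions appearing in the statement --- which is routine.
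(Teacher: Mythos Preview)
The paper does not supply its own proof of this theorem: it is quoted as a known result due to He and Yuan and invoked as a black box, with the paper's original contribution being Theorem~\ref{thrm2} (which \emph{uses} Theorem~\ref{thrm} via Corollary~\ref{cor}). So there is no in-paper argument to compare your sketch against.

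That said, your outline is the standard He--Yuan variational-inequality proof and is structurally sound. You have correctly identified the three ingredients --- the skew-symmetric VI reformulation, the one-step contraction $\|v\kp-v\opt\|_M^2\le\|v^k-v\opt\|_M^2-\|v^k-v\kp\|_M^2$, and the monotone decrease of the residual $\|v^k-v\kp\|_M^2$ --- and you are right that the last of these is the nontrivial step requiring a separate two-consecutive-iterations argument rather than following from the contraction alone. The only point worth flagging is the one you already raised as a ``loose end'': in the ADMM as written in \eqref{admm} the variable $\lambda^k$ is the \emph{scaled} multiplier, and the He--Yuan $M$-norm naturally carries a factor of $\tau$ on the $By$-block and $\tau^{-1}$ on the $\lambda$-block. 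The theorem as stated in the paper suppresses this $\tau$-dependence, so to match the displayed inequality exactly you should check whether the intended normalization is $\tau=1$ or whether the paper has silently absorbed $\tau$ into $\lambda$; either way this is bookkeeping, not a gap in the mathematics.
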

It was observed in \cite{BPCPE10} that $\|Ax\kp+By\kp\|^2$ and
$\|A^TB(y\kp-y^k)\|^2$ are measures of  primal and dual infeasibility,
respectively.  Thus, Theorem \ref{thrm} guarantees the iterates of ADMM
approach primal and dual feasibility (and thus optimality) as $k\to \infty,$
and that the infeasibility errors decrease with rate $O(1/k)$ in the worst
case.
We can specialize this result to the case of unwrapped ADMM to obtain Corollary \ref{cor}.
\begin{corollary} If formulation \eqref{single_term} is feasible, then the
iterates of the unwrapped ADMM \ref{alg:unwrap} satisfy 
\mlts{ \|y\kp-y^k\|^2+\|Dx\kp -y\kp\|^2 \\
 \le  \frac{\|y^0-Dx\opt\|^2+\|\lambda^0-\lambda\opt\|^2}{k+1}}
where $y\opt$ is an optimal solution for \eqref{two_term} and $\lambda\opt$ is the corresponding optimal Lagrange multiplier.
\label{cor} 
\end{corollary}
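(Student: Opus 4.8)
The plan is to obtain Corollary~\ref{cor} as an immediate specialization of Theorem~\ref{thrm}. First I would observe that Algorithm~\ref{alg:unwrap} is nothing but the generic ADMM recursion \eqref{admm} applied to the split problem \eqref{split}, under the substitution $A=D$, $B=-I$, $g=0$, and $h=f$. With these choices the generic $x$-update $\argmin_x \tot\|Dx-y^k+\lambda^k\|^2$ collapses (the factor $\tau/2$ being irrelevant to the minimizer) to Line~3, the generic $y$-update to Line~4, and the multiplier step $\lambda\kp=\lambda^k+Dx\kp-y\kp$ to Line~5. Since $g\equiv 0$ is a proper convex function and $f$ is convex --- feasibility of \eqref{single_term}, meaning $f$ is finite somewhere on the range of $D$, makes $f$ proper and guarantees that \eqref{split} has an optimal primal solution $y\opt$ with an associated optimal multiplier $\lambda\opt$ --- the hypotheses of Theorem~\ref{thrm} are met.

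Next I would substitute $B=-I$ and $A=D$ directly into the conclusion of Theorem~\ref{thrm}. On the left-hand side this gives $\|B(y\kp-y^k)\|^2=\|y\kp-y^k\|^2$ and $\|Ax\kp+By\kp\|^2=\|Dx\kp-y\kp\|^2$, which is exactly the left-hand side of the corollary. On the right-hand side $\|B(y^0-y\opt)\|^2=\|y^0-y\opt\|^2$ while $\|\lambda^0-\lambda\opt\|^2$ is unchanged, so the only remaining step is to rewrite $y\opt$ as $Dx\opt$.

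That rewrite is the single point needing a (one-line) argument: an optimal solution of \eqref{split} is in particular feasible for \eqref{split}, and the constraint there is precisely $y=Dx$, so $y\opt=Dx\opt$ for the corresponding optimal $x\opt$. Plugging $y\opt=Dx\opt$ into the bound turns $\|y^0-y\opt\|^2$ into $\|y^0-Dx\opt\|^2$ and yields the stated inequality. I do not anticipate a genuine obstacle; the only points to watch are (i) the stepsize normalization --- $\tau$ appears on neither side of the corollary, so whichever convention is implicit in Theorem~\ref{thrm} passes through harmlessly --- and (ii) that the $x$-update need not be single-valued when $D$ lacks full column rank, but Theorem~\ref{thrm} only requires $x\kp$ to be \emph{some} minimizer, which the pseudoinverse formula $x\kp=D^+(y^k-\lambda^k)$ in Algorithm~\ref{alg:unwrap} supplies.
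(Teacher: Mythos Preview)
Your proposal is correct and follows exactly the approach the paper takes: the paper simply states that one ``specialize[s] this result to the case of unwrapped ADMM to obtain Corollary~\ref{cor},'' and your write-up just makes this specialization explicit by plugging $A=D$, $B=-I$, $g=0$, $h=f$ into Theorem~\ref{thrm} and using the feasibility relation $y\opt=Dx\opt$. The additional remarks you include about $\tau$ and the pseudoinverse are more detail than the paper provides, but are consistent with it.
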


While Theorem \ref{thrm} and Corollary \ref{cor} show convergence in the sense that primal and dual feasibility are attained for large $k$, it seems more natural to ask for a direct measure of optimality with respect to the objective function \eqref{single_term}.  In the case that $f$ is differentiable, we can exploit the simple form of unwrapped ADMM and show the derivative of \eqref{single_term} goes to zero, thus directly showing that $x^k$ is a good approximate minimizer of \eqref{single_term} for large $k$.

\begin{theorem}
If the gradient of $f$ exists and has Lipschitz constant $L(\nabla f),$ then Algorithm \ref{alg:unwrap} shrinks the gradient of the objective function \eqref{single_term} with global rate
\mlts{
\| \nabla \{f(Dx^k)\}\|^2=\| D^T\nabla f(Dx^k)\}\|^2  \\
   \le C\frac{\|y^0-Dx\opt\|^2+\|\lambda^0-\lambda\opt\|^2}{k}
 }
where $ C = (L(\nabla f)+\tau)^2\rho(D^TD)$ is a constant and $\rho(D^TD)$
denotes the spectral radius of $D^TD.$
\label{thrm2} 
\end{theorem}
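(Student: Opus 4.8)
The plan is to exploit the unusually simple structure of one step of the unwrapped ADMM (Algorithm~\ref{alg:unwrap}) to rewrite the objective gradient $D^T\nabla f(Dx^k)$ in terms of quantities that Corollary~\ref{cor} already forces to zero at rate $O(1/k)$. Two elementary facts will be used repeatedly: the chain rule $\nabla_x\{f(Dx)\}=D^T\nabla f(Dx)$, and the bound $\|D^Tv\|^2=v^TDD^Tv\le\rho(D^TD)\|v\|^2$, which holds because $D^TD$ and $DD^T$ share the same largest eigenvalue.

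The first step is to extract the two first-order optimality conditions implicit in a single iteration. Setting the $y$-gradient of Line~4 to zero gives $\nabla f(y\kp)=\tau(Dx\kp-y\kp+\lambda^k)$, and the $\lambda$-update in Line~5 identifies the parenthesized vector as $\lambda\kp$, so $\nabla f(y\kp)=\tau\lambda\kp$ for every $k\ge 1$. Setting the $x$-gradient of Line~3 to zero gives $D^T(Dx\kp-y^k+\lambda^k)=0$; substituting $\lambda^k=\lambda\kp-Dx\kp+y\kp$ (again from Line~5) collapses this to $D^T\lambda\kp=D^T(y^k-y\kp)$, i.e.\ $D^T\lambda^k=D^T(y^{k-1}-y^k)$. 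Thus the ``bulk'' of the objective gradient, $D^T\nabla f(y^k)=\tau D^T\lambda^k$, is exactly the successive-iterate difference $\tau D^T(y^{k-1}-y^k)$.

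Next I would combine these facts: writing $\nabla f(Dx^k)=\bigl(\nabla f(Dx^k)-\nabla f(y^k)\bigr)+\nabla f(y^k)$ and applying the above yields $D^T\nabla f(Dx^k)=D^T\bigl(\nabla f(Dx^k)-\nabla f(y^k)+\tau(y^{k-1}-y^k)\bigr)$. Bounding with $\|D^Tv\|^2\le\rho(D^TD)\|v\|^2$, then the triangle inequality, then the Lipschitz bound $\|\nabla f(Dx^k)-\nabla f(y^k)\|\le L(\nabla f)\|Dx^k-y^k\|$, and abbreviating $a=\|Dx^k-y^k\|$, $b=\|y^k-y^{k-1}\|$, gives $\|D^T\nabla f(Dx^k)\|^2\le\rho(D^TD)\bigl(L(\nabla f)\,a+\tau b\bigr)^2$. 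Since $L(\nabla f)\,a+\tau b\le(L(\nabla f)+\tau)\max\{a,b\}\le(L(\nabla f)+\tau)\sqrt{a^2+b^2}$, this is at most $(L(\nabla f)+\tau)^2\rho(D^TD)(a^2+b^2)$, and Corollary~\ref{cor} applied at iteration $k-1$ bounds $a^2+b^2=\|Dx^k-y^k\|^2+\|y^k-y^{k-1}\|^2$ by $(\|y^0-Dx\opt\|^2+\|\lambda^0-\lambda\opt\|^2)/k$, which is precisely the claim with $C=(L(\nabla f)+\tau)^2\rho(D^TD)$.

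The only genuinely delicate point is the second step: noticing that $D^T\lambda^k$ telescopes into $D^T(y^{k-1}-y^k)$, so the objective gradient ends up controlled entirely by the two residuals $\|Dx^k-y^k\|$ and $\|y^k-y^{k-1}\|$ that Corollary~\ref{cor} already handles. Everything else is bookkeeping --- reconciling the $k$ versus $k+1$ indexing (the identity $\nabla f(y^k)=\tau\lambda^k$ requires $k\ge 1$), and accepting the slightly wasteful constant $(L(\nabla f)+\tau)^2$ in place of the sharper $L(\nabla f)^2+\tau^2$ that Cauchy--Schwarz would give.
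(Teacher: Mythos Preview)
Your proposal is correct and follows essentially the same route as the paper: extract the first-order optimality conditions of the $x$- and $y$-subproblems, combine them to get $D^T\nabla f(y^k)=\tau D^T(y^{k-1}-y^k)$, split $D^T\nabla f(Dx^k)$ into a Lipschitz-controlled piece plus this residual, and then invoke Corollary~\ref{cor}. The only cosmetic difference is that the paper keeps the index at $k{+}1$ (writing the residual as $y^k-y^{k+1}$) and bounds the two terms separately by $\sqrt{\cdots/k}$ before summing, whereas you shift down to $k{-}1$ and use $(La+\tau b)^2\le(L+\tau)^2(a^2+b^2)$ together with a single application of the Corollary; both lead to the identical constant $C$.
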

\begin{proof}
We begin by writing the optimality condition for the $x$-update in Algorithm \ref{alg:unwrap}:
\mlt{ \label{eq1}D^T(Dx\kp-y^k+\lambda^k) \\ =D^T \lambda\kp+D^T(y\kp-y^k)=0.}
Note we used the definition $\lambda\kp = \lambda^k+Dx\kp-y\kp$ to simplify \eqref{eq1}. 
Similarly, the optimality condition for the $y$-update yields
\mlts{\nabla f(y\kp)+\tau(y\kp-Dx\kp-\lambda^k) \\ = \nabla f(y\kp)-\tau\lambda\kp=0,}
or equivalently $\nabla f(y\kp) = \tau \lambda\kp.$  Combining this with
equation \eqref{eq1} yields
$$D^T\nabla f(y^k)=\tau D^T(y^k-y\kp).$$
We now have
\alns{
\partial_x \{f(Dx^k)\} =& D^T\nabla f(Dx^k) = D^T\nabla f(y^k+Dx^k-y^k) \\
=& D^T\nabla f(y^k+Dx^k-y^k) \\
   &-D^T\nabla f(y^k)+\tau D^T(y^k-y\kp)
}
and so $\|\partial_x \{f(Dx^k)\}\| $ is bounded above by
$$\|D^T\nabla f(y^k+Dx^k-y^k)-D^T\nabla f(y^k)\|  +\tau\|D^T(y^k-y\kp)\| $$
$$ \le  L(\nabla f)\|D^T\|_{op}  \|Dx^k-y^k\|+\tau\|D^T\|_{op} \|y^k-y\kp\|. \label{eq2}
$$
 where $\|D^T\|_{op}$ denotes the operator norm of $D^T$.

Now, by Corollary \eqref{cor}, we know that both  
$\|Dx^k-y^k\|$ and   $\|y^k-y\kp\|$ are bounded above by $\sqrt{\frac{\|y^0-Dx\opt\|^2+\|\lambda^0-\lambda\opt\|^2}{k}}.$  Applying this bound to \eqref{eq2} yields
\alns{
\|\partial_x \{f(Dx^k)\}\| \le&  \sqrt{C\frac{\|y^0-Dx\opt\|^2+\|\lambda^0-\lambda\opt\|^2}{k}}.
 }
We obtain the result by squaring this inequality and noting that $\|D\|_{op}^2 = \rho(D^TD).$
\end{proof}
Note  the logistic regression problem \eqref{log} satisfies the conditions of Theorem \ref{thrm2} with $L(\nabla f) = 1/4.$

Finally, we remark that better convergence rates are possible in special cases.  For example when $f$ is strongly convex, accelerated ADMM obtains $O(1/k^2)$ convergence \cite{GOSB14}, and if we further assume $D$ has full row rank, R-linear convergence is possible \cite{DY12}.

\section{Implementation Details}
We compare transpose reduction methods to consensus ADMM using both synthetic
and empirical data.  We study the transpose reduction scheme for lasso (Section \ref{sec:lasso}) in addition to the unwrapped ADMM (Algorithm \ref{alg:unwrap_dist}) for logistic regression and SVM.

We built a distributed implemention of the transpose reduction methods along with consensus
optimization, and ran all experiments on Armstrong, a 30,000-core Cray
XC30 hosted by the DOD Supercomputing Resource Center.
This allow us to study experiments ranging in size from very small to
extremely large.  

All distributed methods were implemented using MPI.  Stopping conditions for both
transpose reduction and consensus methods were set using the residuals defined
in \cite{BPCPE10} with $\epsilon_{rel}=10^{-3},$ and $\epsilon_{abs}=10^{-6}.$

Many steps were taken to achieve top performance of the consensus optimization
routine.  The authors of \cite{BPCPE10} suggest using a stepsize parameter
$\tau=1;$ however, substantially better performance is achieved by tuning this
parameter.  In our experiments, we tuned the stepsize parameter to achieve
convergence in a minimal number of iterations on a problem instance with
$m=10,000$ data vectors and $n=100$ features per vector, and then scaled the
stepsize parameter up/down to be proportional to $m.$  It was found that this
scaling made the number of iterations nearly invariant to $n$ and $m.$  The
iterative solvers for each local logistic regression/SVM problem were
warm-started using solutions from the previous iteration.   

The logistic regression subproblems were solved using a limited memory BFGS method (with warm start to accelerate performance).  The transpose reduced lasso method (Section \ref{sec:lasso} ) requires a sparse least-squares method to solve the entire lasso problem on a single node.  This was
accomplished using the forward-backward splitting implementation FASTA
\cite{GSB:2014, FASTA:2014}.

Note that the SVM problem \eqref{svm_prox} cannot be solved by conventional
SVM solvers (despite its similarity to the classical SVM formulation).  For
this reason, we built a custom solver using the same coordinate descent
techniques as the well-known solver LIBSVM \cite{CL11}.  By using warm starts
and exploiting the structure of problem \eqref{svm_prox}, our custom method
solves the problem \eqref{svm_prox} dramatically faster than standard solvers
for problem \eqref{svm}.  See Appendix \ref{sec:svm_dual} for details.

\section{Numerical Experiments}
\label{sec:experiments}

\begin{figure*}[ht]
  \centering
  \begin{subfigure}[b]{0.3\textwidth}
    \includegraphics[width=\textwidth]{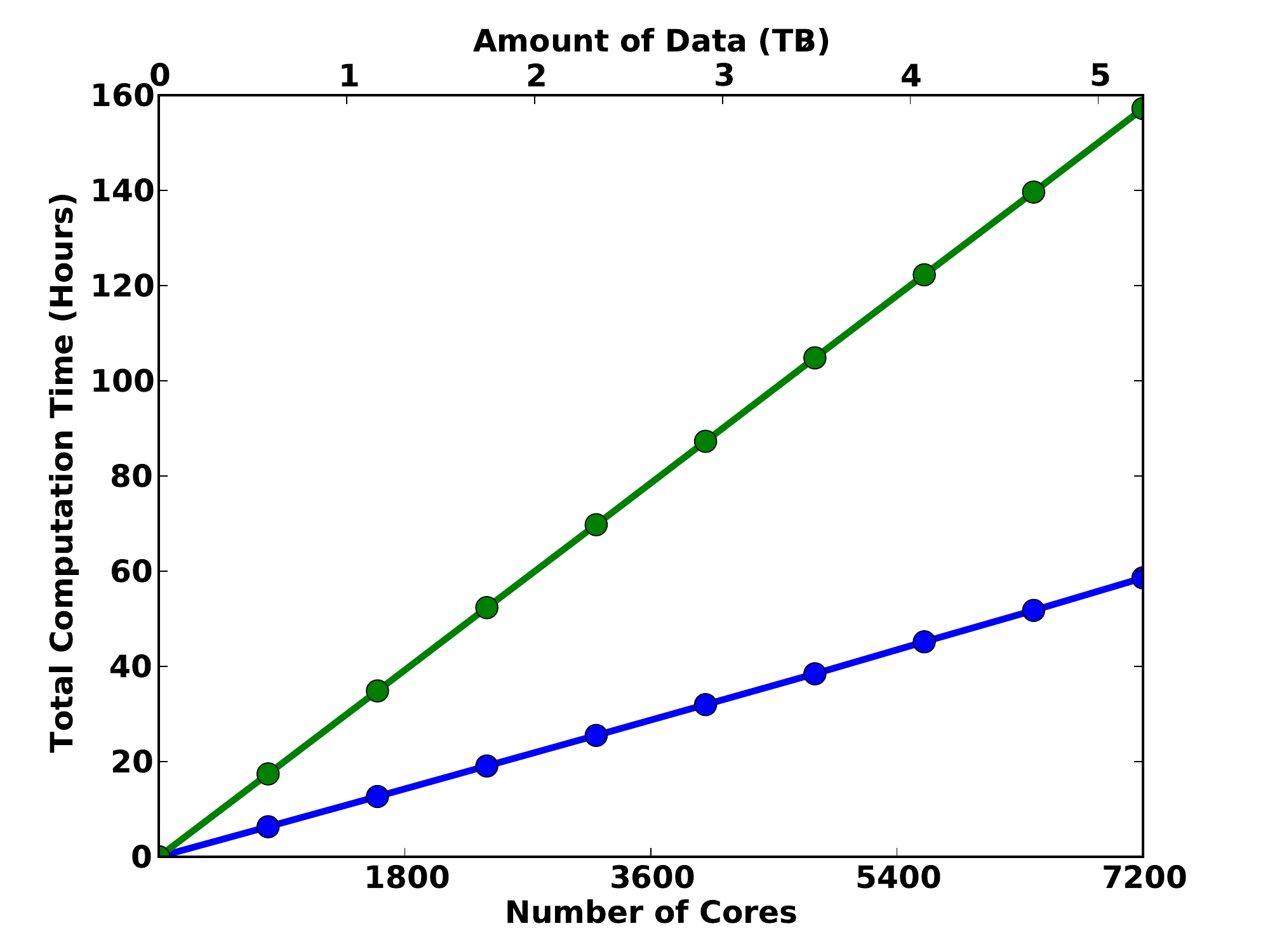}
    \caption{\textbf{Logistic regression with homogeneous data}.  Experiments
    used 50,000 data points of 2,000 features each per computing core.\\ \\}
    \label{fig:tuneNodes}
  \end{subfigure}
  \hfill
  \begin{subfigure}[b]{0.3\textwidth}
    \includegraphics[width=\textwidth]{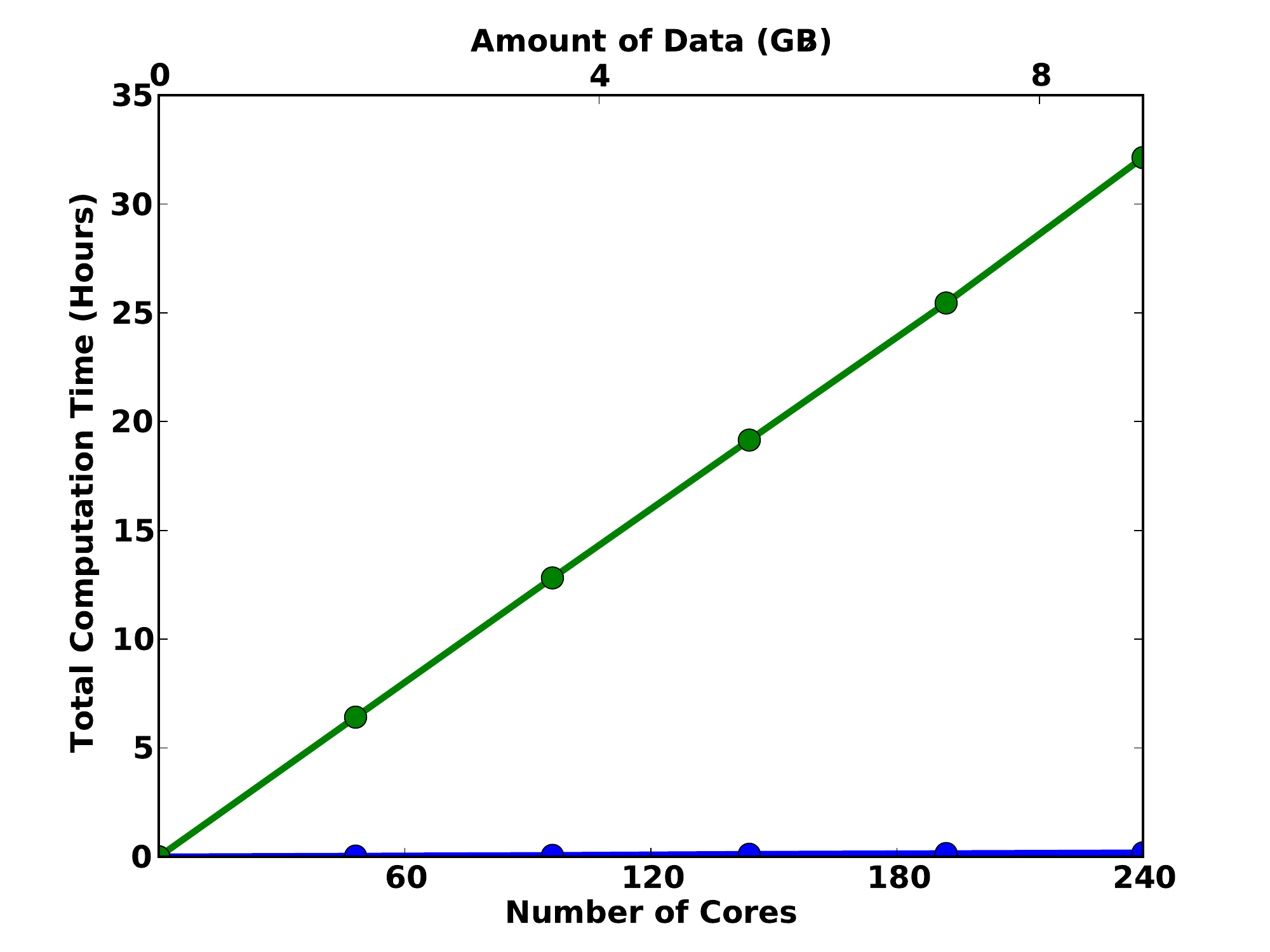}
    \caption{\textbf{SVM with homogeneous data}. Experiments used 50,000 data
      points of 100 features each per computing core. The longest timed run of
      Transpose ADMM took about 11 minutes of total computing time.}
    \label{fig:svm}
  \end{subfigure}
  \hfill
  \begin{subfigure}[b]{0.3\textwidth} 
    \includegraphics[width=\textwidth]{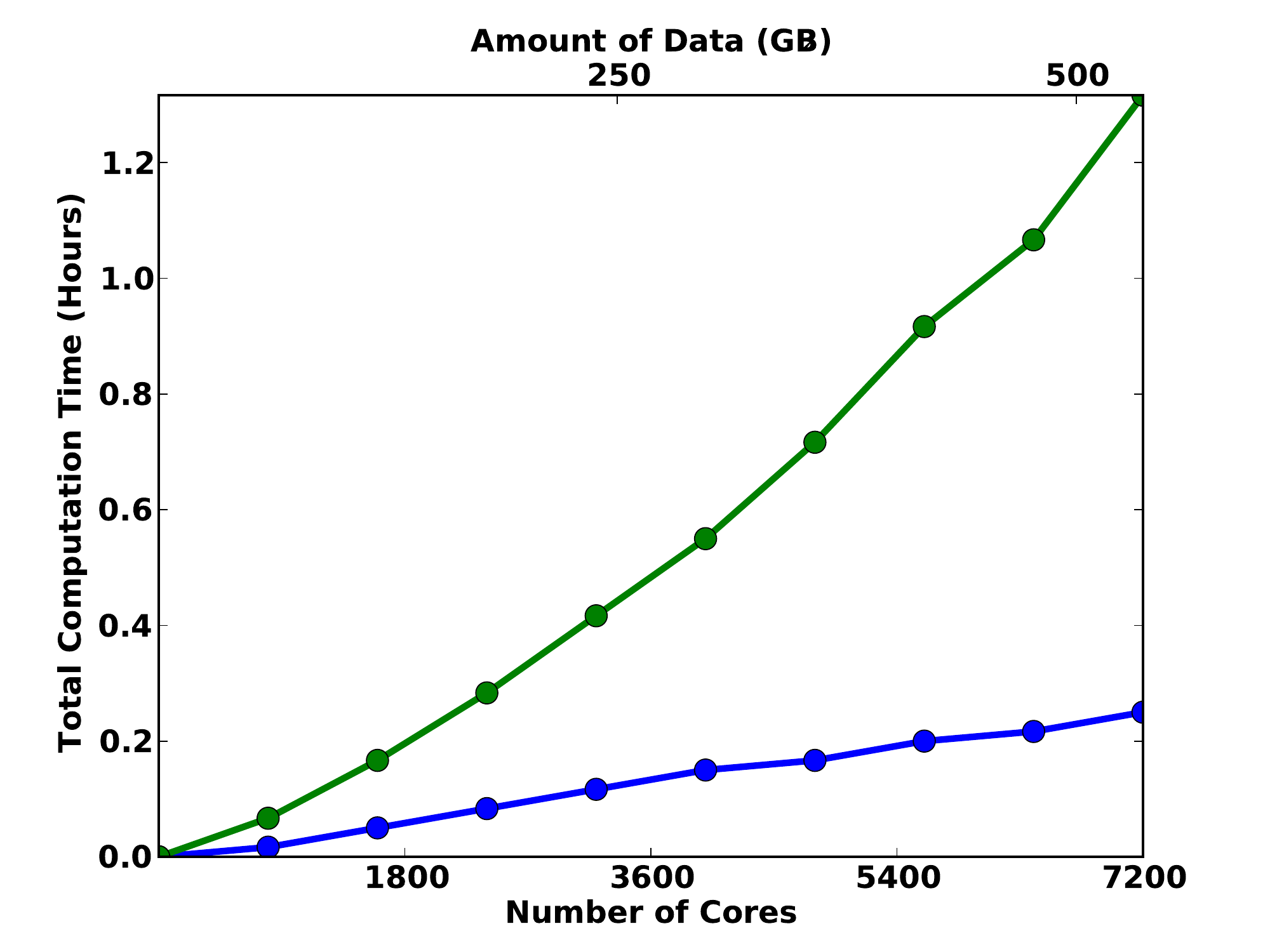}
    \caption{\textbf{Lasso with heterogeneous data}. Experiments used 50,000
    data points of 200 features each per computing core.\\ \\ \\}
    \label{fig:lasso}
  \end{subfigure}
  \vspace{-4mm} 
  \caption{Selected results from timing \textbf{Consensus ADMM (green)} and
    \textbf{Transpose ADMM (blue)} when solving three different optimization
    problems with varying data corpus sizes. In each experiment, every core
    was given an identically-sized subset of the data, so data corpus size and
    number of cores are related linearly.  The top horizontal axis denotes the
    total data corpus size, while the bottom horizontal axis denotes the
    number of computing cores used to process the data.  The vertical axis
    denotes total computing time.
  }
  \label{fig:tuneCores}
\end{figure*}

To illustrate the behavior of transpose reduction methods on various data
corpus sizes and multiple optimization problems, we applied consensus and
transpose solvers to both synthetic and empirical datasets. We recorded both
the {\em total compute time} and the {\em wallclock time}.  We define total
compute time as the total amount of time computing cores spend performing
calculations. Total compute time does not include communication and extra
diagnostic computation not necessary for optimization (such as per-iteration
computation of the objective function); wall time includes all calculation and
communication.

\subsection{Synthetic Data} \label{synth}

Synthetic data was constructed as follows:\\
\textbf{Lasso problems}  We use the same synthetic problems used to study
consensus ADMM in \cite{BPCPE10}.  The data matrix $D$ is a random Gaussian
matrix.  The true solution $x_{true}$ contains 10 active features with unit
magnitude, and the remaining entries are zero.  The $\ell_1$ penalty $\mu$ is
chosen as suggested in  \cite{BPCPE10} --- i.e., the penalty is 10\% the
magnitude of the penalty for which the solution to \eqref{rls} becomes zero.
The observation vector is $b=Dx_{true}+\eta,$ where $\eta$ is a standard
Gaussian noise vector with $\sigma=1.$\\
\textbf{Classification problems} We generated two random Gaussian matrices,
one for each class.  The first class consists of zero-mean Gaussian entries.
The first 5 columns of the second matrix were random Gaussian with mean 1, and
the remaining columns were mean zero.  Note the data classes generated by this
process are not perfectly linearly separable.  The $\ell_1$ penalty was set
using the ``10\%'' rule used in \cite{BPCPE10}.

In addition to these ``standard'' test problem, we also study the important
(and often more realistic) case where data is heterogeneous across nodes.\\
\textbf{Heterogeneous Data} To create data heterogeneity, we chose one random
Gaussian scalar for each node, and added it to the matrix $D_i.$  This has the
effect of making the behavior of the sub-problems on each node different.
Note that, with standard Gaussian test problems, each consensus node is
essentially solving the same problem (i.e., data matrices have perfectly
identical statistical properties), and thus they arrive at a consensus
quickly.  In practical applications, data on different nodes may represent
data from different sources, or else not be identically distributed.  The
behavior of consensus in such situations is radically different than when the
data is identically distributed.  The heterogeneous synthetic data models this
scenario.

We performed experiments using logistic regression, SVM, and Lasso. On
different trial runs, we vary the number of cores used, the length of the
feature vectors, and the number of data points per core.  Three examples of
our results are illustrated in Figure \ref{fig:tuneCores}, while more complete
tables of results appear Appendix
\ref{app:tables}.  In addition, convergence curves for experiments on both
homogeneous and heterogeneous data can be seen in Figures \ref{fig:homoLR7200}
and \ref{fig:heteroLR7200}.

\subsection{Empirical Case Study:  Classifying Guide Stars}

We study the behavior of transpose reduction and consensus methods using the
Second Generation Guide Star Catalog (GCS-II) \cite{lasker2008second}, an
astronomical database containing 950 million stars and other astronomical
objects.  Intensity measurements in different spectral bands are recorded for
each object, in addition to geometric properties such as size and
eccentricity.  The GSC-II also contains a binary classification that labels
every astronomical body as ``star'' or ``not a star.''   We train a sparse
logistic classifier to discern this classification using only spectral and
geometric features.

A data matrix was compiled by selecting the 17 spectral and geometric
measurements reported in the catalog, and also ``interaction features'' made
of all second-order products of the 17 measurements; all features were then
normalized.  After the addition of a bias feature, the resulting matrix has
307 features per object, and requires 1.8 TB of space to store in binary form.  

We ran experiments observing the decrease of the global objective function as
a function of wallclock time; these experiments showed that transpose ADMM
converged far more quickly than consensus.  Convergence curves for
this experiment are shown in Figure \ref{fig:starConvergence}.  

We also experiment with the effect of storing the data matrix across different
numbers of cores (i.e. increasing the number of nodes). These experiments
illustrated that this variable had little effect on the relative performance
of the two optimization strategies; transpose methods remained far more
efficient regardless of the load on individual cores.  Table \ref{fig:gsc_table} reports runtimes with varying
numbers of cores.

Note the strong advantage of transpose reduction over consensus ADMM in Figure
\ref{fig:starConvergence}.  This confirms the results of Section \ref{synth},
where it was observed that transpose reduction methods are particularly
powerful for the heterogeneous data observed in realistic datasets, as opposed
to the identically distributed matrices used in many synthetic experiments.

\begin{figure*}[ht]
  \centering
  \begin{subfigure}[b]{0.3\textwidth}
    \includegraphics[width=\textwidth]{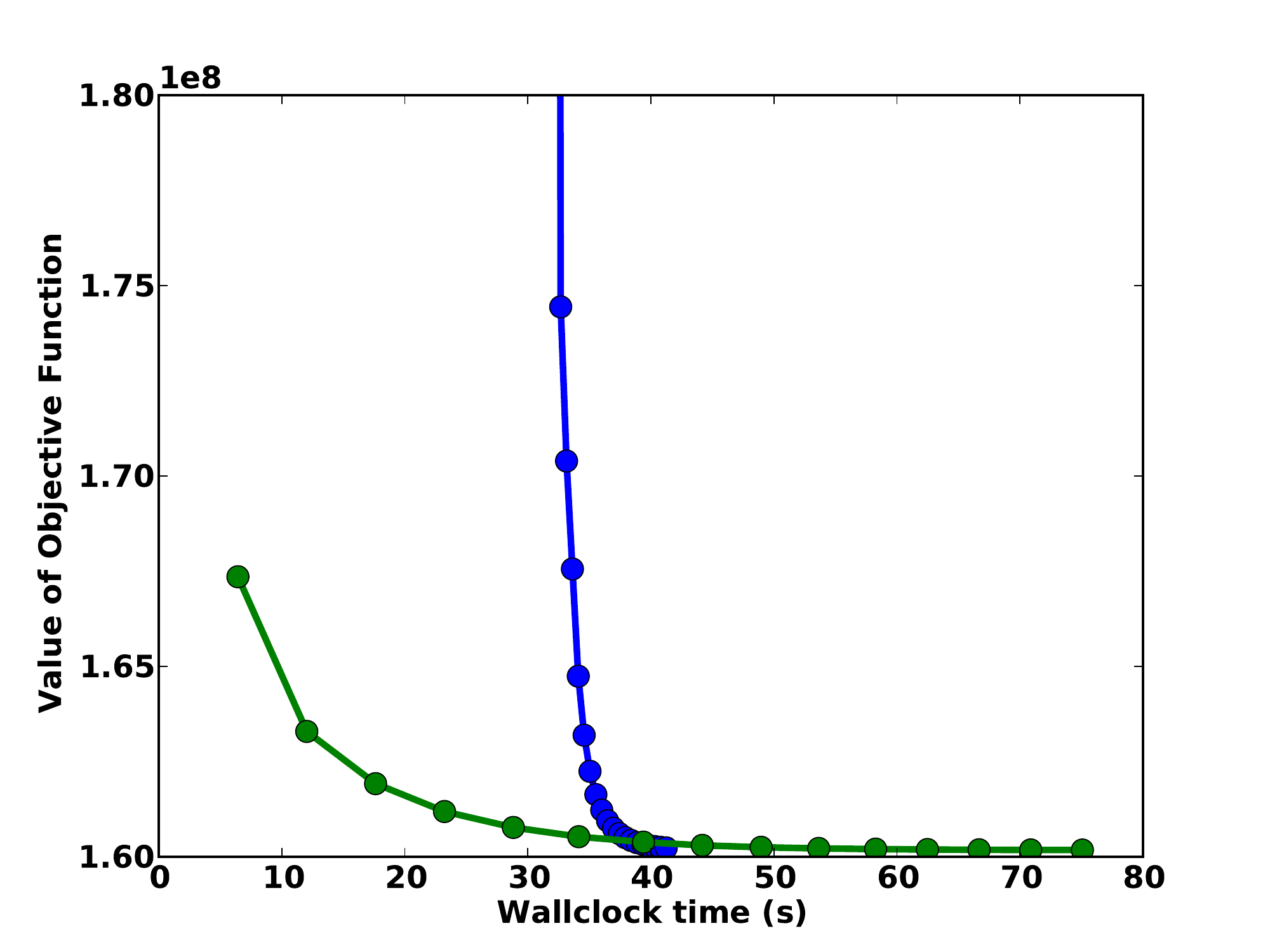}
    \caption{\textbf{Homogeneous data}.  Experiments used 7,200 computing
    cores with 50,000 data points of 2,000 features each. \\}
    \label{fig:homoLR7200}
  \end{subfigure}
  \hfill
  \begin{subfigure}[b]{0.3\textwidth}
    \includegraphics[width=\textwidth]{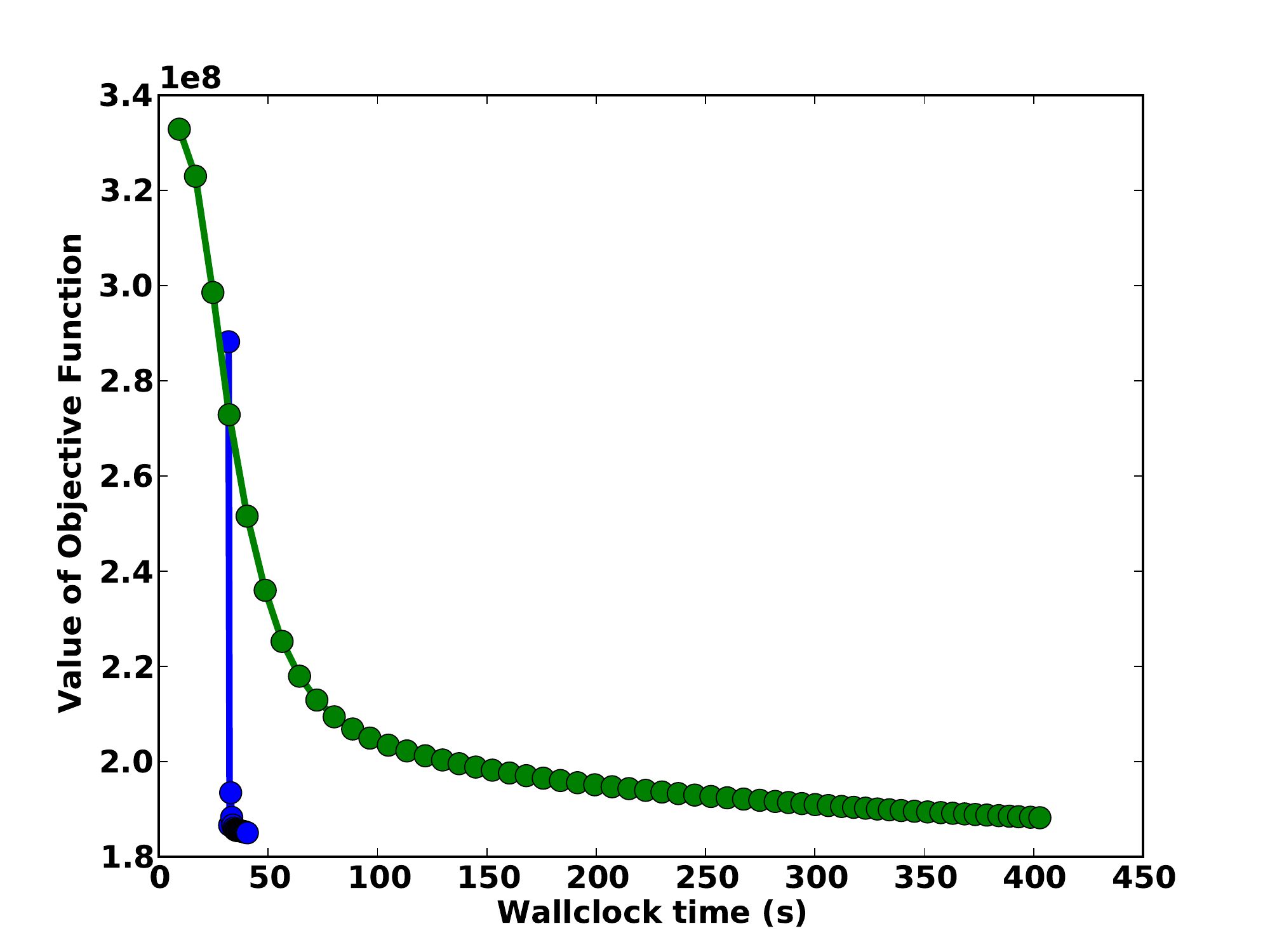}
    \caption{\textbf{Heterogeneous data}.  Experiments used 7,200 computing
    cores with 50,000 data points of 2,000 features each. \\}
    \label{fig:heteroLR7200}
  \end{subfigure}
  \hfill
  \begin{subfigure}[b]{0.3\textwidth}
    \includegraphics[width=\textwidth]{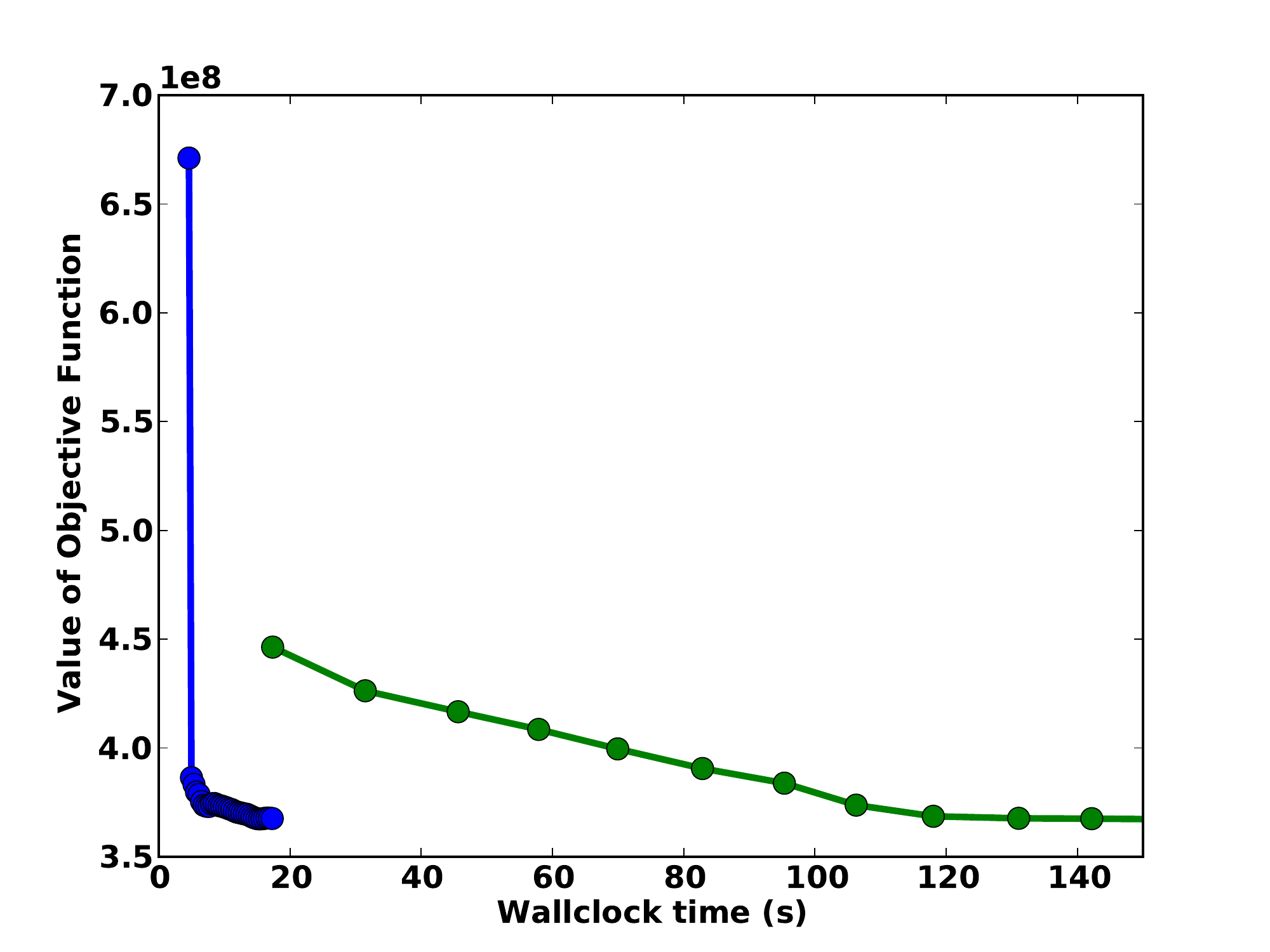}
    \caption{\textbf{Empirical star data}.  Experiments used 2,500 cores to
    classify a data set of 1.8TB. Consensus ADMM did not terminate until 1160
    seconds.}
    \label{fig:starConvergence}
  \end{subfigure}
    \vspace{-4mm} 
  \caption{Value of the logistic regression objective function as a function
    of wallclock time for \textbf{Consensus ADMM (green)} and
    \textbf{Transpose ADMM (blue)}.}
  \label{fig:convergence}
\end{figure*}

 \begin{center}
\begin{table} 
\begin{tabular}{lll|ll}
\hline
Cores &T-Wall &T-Comp &C-Wall &C-Comp\\
\hline
2500 &0:01:06 &11:35:25 &0:24:39 &31d,19:59:13\\
3000 &0:00:49 &12:10:33 &0:21:43 &32d, 2:44:11\\
3500 &0:00:50 &12:17:27 &0:17:01 &30d, 7:56:19\\
4000 &0:00:45 &12:38:24 &0:29:53 &40d, 13:38:19\\
\hline \vspace{-4mm}
\label{tab:star}
\end{tabular}
\caption{Wall clock and Total compute times for the logistic regression
problem on the Second Generation Guide Star Catalog.  Experiments show
different numbers of cores used for the same 1.8TB dataset.  Columns labeled
with ``T-'' denote results for transpose reduction ADMM, and the label ``C-''
denotes consensus ADMM. Times are reported in the format (days,)
hours:minutes:seconds.}  
\label{fig:gsc_table}
\end{table}
\vspace{-8mm}
\end{center}

\section{Discussion}
In all experiments, transpose reduction methods required substantially less computation
time than consensus methods.  As seen in Figure \ref{fig:tuneCores}, Figure
\ref{fig:convergence}, and Appendix
\ref{app:tables}, performance of both transpose reduction and consensus
methods scales nearly linearly with the number of cores and amount of data
used for classification problems.  For the lasso problem (Figure \ref{fig:lasso}), the runtime of the transpose reduction method appears to
grows sub-linearly with number of cores, in contrast to consensus methods that
have a strong dependence on this parameter.  This is largely because the
transpose reduction method does all iteration on a single machine, whereas
consensus optimization requires progressively more communication with larger
numbers of cores.

Note that transpose reduction methods require more startup time for some
problems than consensus methods because the local Gram matrices $D_i^TD_i$
must be sent to the central node, aggregated, and the result inverted; this is
not true for the lasso problem, for which consensus solvers must also invert
a local Gram matrix on each node, though this at least saves startup communication
costs.  This startup time is particularly noticeable when overall solve time
is short, as in Figure \ref{fig:heteroLR7200}.  Note, however, that even for
this problem total computation time and wall time was still shorter with
transpose reduction than with consensus methods.

\subsection{Effect of Heterogeneous Data}
An important property of the proposed transpose reduction methods is that they
solve {\em global} subproblems over the entire data corpus (as opposed to
consensus methods where sub-problems involve a small portion of the
distributed data).  The power of solving global problems is very apparent when
data is heterogeneous across nodes.  Heterogeneity means that data on
different nodes has different statistical properties (as opposed to
homogeneous experiments where every data matrix has identical properties).  

Heterogeneity has a strong effect on consensus methods.  When data is
heterogeneous across nodes it causes local sub-problems to differ, and thus
the nodes have a stronger tendency to ``disagree'' on the solution, taking
longer to reach a consensus.  This effect is illustrated by a comparison
between Figures \ref{fig:homoLR7200} and \ref{fig:heteroLR7200}, where
consensus methods took much longer to converge on heterogeneous data sets,
while the transpose method was not affected.

In contrast, because transpose reduction solves {\em global} sub-problems
across the entire distributed data corpus, it is relatively insensitive to
data heterogeneity across nodes.  Data heterogeneity explains the strong
advantage of transpose reduction on the  GSC-II dataset (Figure
\ref{fig:starConvergence}, Table \ref{fig:gsc_table}), which contains
empirical data and is thus not uniformly distributed.

\subsection{Communication \& Computation}
The transpose reduction ADMM leverages a tradeoff between communication and
computation. When $N$ nodes are used to solve a problem with a distributed
data matrix $D\in \reals^{m\times n},$ each node in consensus ADMM transmits
$x_i\in \reals^n$ to the central server, which totals to $O(Nn)$
communication.  Unwrapped ADMM requires $O(m)$ communication per iteration,
which is often somewhat more.  Despite this difference, transpose reduction
methods are still highly efficient because (a) they require dramatically less
computation, and (b) communication with the server is substantially more
efficient, even when more information is sent. We discuss these two issues in
detail below.

First, unwrapped ADMM requires substantially less computation than consensus
methods.  Consensus ADMM requires inner iterations to solve expensive
sub-problems on each node.  Because logistic regression and SVM can be very
expensive for large data sets, consensus optimization is often CPU bound
rather than communication bound.  In contrast, the sub-problems of unwrapped
ADMM are coordinate-wise separable and generally available in closed form.    

Second, transpose reduction methods stay synchronized better than consensus
ADMM, which makes communication more efficient on synchronous architectures.
The iterative methods used by consensus ADMM for logistic regression and SVM
sub-problems do not terminate at the same time on every machine, especially
when the data is heterogeneous across nodes.  Because all client nodes must
complete their inner iterations before an iteration is complete, most of the
communication overhead of ADMM is caused by calls to MPI routines that block
until all nodes become synchronized.  In contrast, Algorithm
\ref{alg:unwrap_dist} requires the same computations on each server, allowing
nodes to stay synchronized naturally.  

\section{Conclusion}
We introduce transpose reduction ADMM --- an iterative method that solves
model fitting problems using global least-squares subproblems over an entire
distributed dataset.  Numerical experiments using both synthetic and empirical
data suggest that the transpose reduction is substantially more efficient than
consensus methods.   When the distributed dataset has heterogeneous properties
across nodes, the global properties of transpose reduction are particularly
advantageous compared to consensus methods, which solve many small
sub-problems using different datasets.

\bibliographystyle{icml2015}
\bibliography{tom_bibdesk}

\begin{thebibliography}{25}
\providecommand{\natexlab}[1]{#1}
\providecommand{\url}[1]{\texttt{#1}}
\expandafter\ifx\csname urlstyle\endcsname\relax
  \providecommand{\doi}[1]{doi: #1}\else
  \providecommand{\doi}{doi: \begingroup \urlstyle{rm}\Url}\fi

\bibitem[Boyd et~al.(2010)Boyd, Parikh, Chu, Peleato, and Eckstein]{BPCPE10}
Boyd, S., Parikh, N., Chu, E., Peleato, B., and Eckstein, J.
\newblock {Distributed Optimization and Statistical Learning via the
  Alternating Direction Method of Multipliers}.
\newblock \emph{Foundations and Trends in Machine Learning}, 2010.

\bibitem[Boyd \& Vandenberghe(2004)Boyd and Vandenberghe]{BV04}
Boyd, Stephen and Vandenberghe, Lieven.
\newblock \emph{Convex Optimization}.
\newblock Cambridge University Press, 2004.

\bibitem[Chang \& Lin(2011)Chang and Lin]{CL11}
Chang, Chih~C. and Lin, Chih~J.
\newblock {LIBSVM: A Library for Support Vector Machines}.
\newblock \emph{ACM Trans. Intell. Syst. Technol.}, 2\penalty0 (3), May 2011.
\newblock ISSN 2157-6904.

\bibitem[Chang et~al.(2015)Chang, Hong, and Wang]{CHW15}
Chang, T.-H., Hong, M., and Wang, X.
\newblock Multi-agent distributed optimization via inexact consensus admm.
\newblock \emph{Signal Processing, IEEE Transactions on}, 63\penalty0
  (2):\penalty0 482--497, Jan 2015.

\bibitem[Chen et~al.(2008)Chen, Park, and Bian]{CPB08}
Chen, Ruiliang, Park, Jung-Min, and Bian, Kaigui.
\newblock Robust distributed spectrum sensing in cognitive radio networks.
\newblock In \emph{INFOCOM 2008. The 27th Conference on Computer
  Communications. IEEE}, April 2008.

\bibitem[Deng \& Yin(2012)Deng and Yin]{DY12}
Deng, Wei and Yin, Wotao.
\newblock On the global and linear convergence of the generalized alternating
  direction method of multipliers.
\newblock \emph{UCLA CAM technical report, 12-52}, 2012.

\bibitem[Eckstein \& Bertsekas(1992)Eckstein and Bertsekas]{EB92}
Eckstein, Jonathan and Bertsekas, Dimitri~P.
\newblock On the douglas-rachford splitting method and the proximal point
  algorithm for maximal monotone operators.
\newblock \emph{Mathematical Programming}, 55:\penalty0 293--318, 1992.

\bibitem[Erseghe et~al.(2011)Erseghe, Zennaro, Dall'Anese, and
  Vangelista]{EZDV11}
Erseghe, T., Zennaro, D., Dall'Anese, E., and Vangelista, L.
\newblock Fast consensus by the alternating direction multipliers method.
\newblock \emph{Signal Processing, IEEE Transactions on}, 59\penalty0
  (11):\penalty0 5523--5537, Nov 2011.

\bibitem[Forero et~al.(2010)Forero, Cano, and Giannakis]{FCG10}
Forero, Pedro~A., Cano, Alfonso, and Giannakis, Georgios~B.
\newblock Consensus-based distributed support vector machines.
\newblock \emph{J. Mach. Learn. Res.}, 11:\penalty0 1663--1707, August 2010.
\newblock ISSN 1532-4435.

\bibitem[Glowinski \& Marroco(1975)Glowinski and Marroco]{GM75}
Glowinski, R. and Marroco, A.
\newblock {Sur l'approximation, par \'{e}l\'{e}ments finis d'ordre un, et la
  r\'{e}solution, par p\'{e}nalisation-dualit\'{e} d'une classe de
  probl\`{e}mes de Dirichlet non lin\'{e}aires}.
\newblock \emph{Rev. Fran\c{c}aise d'Automat. Inf. Recherche
  Op\'{e}rationelle}, 9\penalty0 (2):\penalty0 41--76, 1975.

\bibitem[Glowinski \& Tallec(1989)Glowinski and Tallec]{GL89}
Glowinski, Roland and Tallec, Patrick~Le.
\newblock \emph{Augmented Lagrangian and Operator-Splitting Methods in
  Nonlinear Mechanics}.
\newblock Society for Industrial and Applied Mathematics, Philadephia, PA,
  1989.

\bibitem[Goldstein et~al.(2014{\natexlab{a}})Goldstein, O'Donoghue, Setzer, and
  Baraniuk]{GOSB14}
Goldstein, T., O'Donoghue, B., Setzer, S., and Baraniuk, R.
\newblock Fast alternating direction optimization methods.
\newblock \emph{SIAM Journal on Imaging Sciences}, 7\penalty0 (3):\penalty0
  1588--1623, 2014{\natexlab{a}}.

\bibitem[Goldstein \& Osher(2009)Goldstein and Osher]{GO08}
Goldstein, Tom and Osher, Stanley.
\newblock The {S}plit {B}regman method for $\ell_1$ regularized problems.
\newblock \emph{SIAM J. Img. Sci.}, 2\penalty0 (2):\penalty0 323--343, April
  2009.

\bibitem[Goldstein et~al.(2014{\natexlab{b}})Goldstein, Studer, and
  Baraniuk]{GSB:2014}
Goldstein, Tom, Studer, Christoph, and Baraniuk, Richard.
\newblock A field guide to forward-backward splitting with a {FASTA}
  implementation.
\newblock \emph{arXiv eprint}, abs/1411.3406, 2014{\natexlab{b}}.

\bibitem[Goldstein et~al.(2015)Goldstein, Studer, and Baraniuk]{FASTA:2014}
Goldstein, Tom, Studer, Christoph, and Baraniuk, Richard.
\newblock {FASTA}: A generalized implementation of forward-backward splitting,
  January 2015.
\newblock http://arxiv.org/abs/1501.04979.

\bibitem[Hardin \& Hilbe(2001)Hardin and Hilbe]{HH01}
Hardin, James~W. and Hilbe, Joseph.
\newblock \emph{Generalized Linear Models and Extensions}.
\newblock College Station, Texas: Stata Press, 2001.

\bibitem[He et~al.(2012)He, Tao, and Yuan]{HTY12}
He, B., Tao, M., and Yuan, X.
\newblock Alternating direction method with gaussian back substitution for
  separable convex programming.
\newblock \emph{SIAM Journal on Optimization}, 22\penalty0 (2):\penalty0
  313--340, 2012.

\bibitem[He \& Yuan(2012)He and Yuan]{HY12:admm}
He, Bingsheng and Yuan, Xiaoming.
\newblock On non-ergodic convergence rate of {D}ouglas-{R}achford alternating
  direction method of multipliers.
\newblock \emph{Optimization Online}, January 2012.

\bibitem[Lasker et~al.(2008)Lasker, Lattanzi, McLean, Bucciarelli, Drimmel,
  Garcia, Greene, Guglielmetti, Hanley, Hawkins, et~al.]{lasker2008second}
Lasker, Barry~M, Lattanzi, Mario~G, McLean, Brian~J, Bucciarelli, Beatrice,
  Drimmel, Ronald, Garcia, Jorge, Greene, Gretchen, Guglielmetti, Fabrizia,
  Hanley, Christopher, Hawkins, George, et~al.
\newblock The second-generation guide star catalog: description and properties.
\newblock \emph{The Astronomical Journal}, 136\penalty0 (2):\penalty0 735,
  2008.

\bibitem[Mota et~al.(2013)Mota, Xavier, Aguiar, and Puschel]{MXAP13}
Mota, J.F.C., Xavier, J.M.F., Aguiar, P.M.Q., and Puschel, M.
\newblock D-admm: A communication-efficient distributed algorithm for separable
  optimization.
\newblock \emph{Signal Processing, IEEE Transactions on}, 61\penalty0
  (10):\penalty0 2718--2723, May 2013.

\bibitem[Ouyang et~al.(2013)Ouyang, He, Tran, and Gray]{OHTG13}
Ouyang, Hua, He, Niao, Tran, Long, and Gray, Alexander~G.
\newblock Stochastic alternating direction method of multipliers.
\newblock In Dasgupta, Sanjoy and Mcallester, David (eds.), \emph{Proceedings
  of the 30th International Conference on Machine Learning (ICML-13)},
  volume~28, pp.\  80--88. JMLR Workshop and Conference Proceedings, 2013.

\bibitem[Rabbat \& Nowak(2004)Rabbat and Nowak]{RN04}
Rabbat, M. and Nowak, R.
\newblock Distributed optimization in sensor networks.
\newblock In \emph{Information Processing in Sensor Networks, 2004. IPSN 2004.
  Third International Symposium on}, pp.\  20--27, April 2004.

\bibitem[Tibshirani(1994)]{Tibshirani94}
Tibshirani, Robert.
\newblock Regression shrinkage and selection via the lasso.
\newblock \emph{Journal of the Royal Statistical Society, Series B},
  58:\penalty0 267--288, 1994.

\bibitem[Tsitsiklis et~al.(1986)Tsitsiklis, Bertsekas, and Athans]{TBA86}
Tsitsiklis, J.N., Bertsekas, D.P., and Athans, M.
\newblock Distributed asynchronous deterministic and stochastic gradient
  optimization algorithms.
\newblock \emph{Automatic Control, IEEE Transactions on}, 31\penalty0
  (9):\penalty0 803--812, Sep 1986.

\bibitem[Zhang \& Kwok(2014)Zhang and Kwok]{ZK14}
Zhang, Ruiliang and Kwok, James.
\newblock Asynchronous distributed admm for consensus optimization.
\newblock In Jebara, Tony and Xing, Eric~P. (eds.), \emph{Proceedings of the
  31st International Conference on Machine Learning (ICML-14)}, pp.\
  1701--1709. JMLR Workshop and Conference Proceedings, 2014.

\end{thebibliography}

\newpage
\onecolumn
\begin{center}\huge{ Appendices}\end{center}
\appendix
\section{SVM Sub-steps for Consensus Optimization} \label{sec:svm_dual}
The consensus SVM requires a solution to the problem \eqref{svm_prox}.
Despise the apparent similarity of this proximal-regularized problem to the
original SVM \eqref{svm}, problem  \eqref{svm_prox} cannot be put into a form
that is solvable to popular solvers for \eqref{svm}.   However, techniques for
the classical SVM problem can be easily adapted to solve \eqref{svm_prox}.

 A common numerical approach to solving \eqref{svm} is the attack its dual, which is
 \eqn{svm_dual}{
 \minimize_{\alpha_i\in [0,C]}  \half\| A^TL\alpha\|^2  - \alpha^T\one 
 \\=  \sum_{i,j}\alpha_i\alpha_jl_il_jA_iA_j^T - \sum_i \alpha_i .
}
Once \eqref{svm_dual} is solved to obtain $\alpha\opt$, the solution to
\eqref{svm} is simply given by $w\opt = L^T\alpha.$  The dual formulation
\eqref{svm_dual} is advantageous because the constraints on $\alpha$ act
separately on each coordinate. The dual is therefore solved efficiently by
coordinate descent, which is the approach used by the popular solver LIBSVM
\cite{CL11}.  This method is particularly powerful when the number of support
vectors in the solution is small, in which case most of the entries in
$\alpha$ assume the value $0$ or $C.$

In the context of consensus ADMM, we must solve 
 \eqn{svm_con}{\minimize \half \|w\|^2 + Ch(Aw,l)+\tot \|w-z\|^2.}
 Following the classical SVM literature, we dualize this problem to obtain
  \eqn{svm_dual_con}{
 \minimize_{\alpha_i\in [0,C]}  \half\| A^TL\alpha\|^2  - \alpha^T\left((1+\tau)\one - \tau Lz\right).
 }
We then solve \eqref{svm_dual_con} for $\alpha\opt,$ and recover the solution via
   $$w\opt = \frac{A^TL\alpha+\tau z}{1+\tau}.$$
 
We solve \eqref{svm_dual_con} using a dual coordinate descent method inspired by
\cite{CL11}.   The implementation has $O(M)$ complexity per iteration. Also
following \cite{CL11} we optimize the convergence by updating coordinates with
the largest residual (derivative) on each pass.
 
Because our solver does not need to handle a ``bias''  variable (in consensus
optimization, only the central server treats the bias variables differently
from other unknowns), and by using a warm start to accelerate solve time
across iterations, our coordinate descent method significantly outperforms
even LIBSVM for each sub-problem.  On a desktop computer with a Core i5
processor, LIBSVM solves the synthetic data test problem with $m=100$
datapoints and $n=200$ features in $3.4$ seconds (excluding ``setup'' time),
as opposed to our custom solver which solves each SVM sub-problem for the
consensus SVM with the same dimensions (on a single processor) in 0.17 seconds
(averaged over all iterations). When $m=10000$ and $n=20,$ LIBSVM requires
over $20$ seconds, while the average solve time for the custom solver embedded
in the consensus method is only  $2.3$ seconds.

\section{Tables of Results}
\label{app:tables}

In the following tables, we use these labels:
\begin{itemize}
  \item N: Number of data points per core
  \item F: Number of features per data point
  \item Cores: Number of compute cores used in computation
  \item Space: Total size of data corpus in GB (truncated at GB)
  \item TWalltime: Walltime for transpose method (truncated at seconds)
  \item TCompute: Total computation time for transpose method (truncated at
    seconds)
  \item CWalltime: Walltime for consensus method (truncated at seconds)
  \item CCompute: Total computation time for consensus method (truncated at
    seconds)
\end{itemize}

\paragraph{Logistic regression with homogeneous data}

\begin{center}
\begin{tabular}{llllllll}
\hline
N &F &Cores &Space(GB) &TWalltime &TCompute &CWalltime &CCompute\\
\hline
50000 &2000 &800 &596 &0:00:53 &6:19:14 &0:01:36 &17:25:18\\
50000 &2000 &1600 &1192 &0:00:58 &12:40:24 &0:01:51 &1 day 10:51:33\\
50000 &2000 &2400 &1788 &0:01:00 &19:05:13 &0:01:52 &2 days 4:21:25\\
50000 &2000 &3200 &2384 &0:01:00 &1 day 1:30:18 &0:01:41 &2 days 21:46:28\\
50000 &2000 &4000 &2980 &0:00:58 &1 day 7:58:24 &0:01:39 &3 days 15:17:51\\
50000 &2000 &4800 &3576 &0:00:58 &1 day 14:27:31 &0:02:31 &4 days 8:49:58\\
50000 &2000 &5600 &4172 &0:01:00 &1 day 21:10:38 &0:02:13 &5 days 2:16:56\\
50000 &2000 &6400 &4768 &0:01:03 &2 days 3:46:42 &0:02:08 &5 days 19:39:40\\
50000 &2000 &7200 &5364 &0:01:21 &2 days 10:36:36 &0:01:47 &6 days 13:12:59\\
\hline
5000 &2000 &4800 &357 &0:00:33 &4:04:11 &0:00:26 &21:01:22\\
10000 &2000 &4800 &715 &0:00:26 &7:51:06 &0:01:22 &1 day 21:24:47\\
15000 &2000 &4800 &1072 &0:00:38 &11:23:22 &0:01:37 &2 days 19:42:30\\
20000 &2000 &4800 &1430 &0:00:42 &15:15:01 &0:01:30 &3 days 19:27:24\\
25000 &2000 &4800 &1788 &0:00:42 &18:59:04 &0:01:48 &4 days 17:24:59\\
30000 &2000 &4800 &2145 &0:00:47 &22:53:25 &0:02:04 &5 days 16:30:28\\
35000 &2000 &4800 &2503 &0:00:57 &1 day 2:43:48 &0:02:46 &6 days 15:10:40\\
40000 &2000 &4800 &2861 &0:00:54 &1 day 6:22:51 &0:02:42 &7 days 14:58:02\\
45000 &2000 &4800 &3218 &0:00:57 &1 day 10:05:17 &0:03:02 &8 days 15:11:42\\
50000 &2000 &4800 &3576 &0:01:02 &1 day 14:28:30 &0:03:24 &9 days 15:51:21\\
\hline
20000 &500 &4800 &357 &0:00:05 &2:18:21 &0:00:35 &20:51:20\\
20000 &1000 &4800 &715 &0:00:12 &5:33:31 &0:01:40 &1 day 18:43:21\\
20000 &1500 &4800 &1072 &0:00:25 &9:44:07 &0:01:08 &2 days 20:08:20\\
20000 &2000 &4800 &1430 &0:00:31 &15:10:01 &0:01:29 &3 days 19:28:56\\
20000 &2500 &4800 &1788 &0:01:23 &1 day 12:24:25 &0:03:30 &4 days 20:53:53\\
20000 &3000 &4800 &2145 &0:01:50 &1 day 20:29:59 &0:03:44 &5 days 19:45:31\\
20000 &3500 &4800 &2503 &0:02:27 &2 days 5:40:09 &0:03:56 &6 days 19:44:54\\
20000 &4000 &4800 &2861 &0:03:03 &2 days 16:50:51 &0:03:46 &7 days 18:17:21\\
20000 &4500 &4800 &3218 &0:04:00 &3 days 3:35:02 &0:04:28 &8 days 19:49:26\\
20000 &5000 &4800 &3576 &0:04:52 &3 days 16:50:21 &0:04:44 &9 days 23:56:16\\
\hline
\label{tab:homoLog}
\end{tabular}
\end{center}

\paragraph{Logistic regression with heterogeneous data}

\begin{center}
\begin{tabular}{llllllll}
\hline
N &F &Cores &Space(GB) &TWalltime &TCompute &CWalltime &CCompute\\
\hline
50000 &2000 &800 &596 &0:00:56 &6:14:57 &0:09:25 &3 days 19:56:38\\
50000 &2000 &1600 &1192 &0:01:01 &12:28:12 &0:09:35 &7 days 19:00:17\\
50000 &2000 &2400 &1788 &0:00:58 &18:43:11 &0:09:35 &11 days 13:26:10\\
50000 &2000 &3200 &2384 &0:00:58 &1 day 1:09:09 &0:09:39 &15 days 10:33:19\\
50000 &2000 &4000 &2980 &0:01:23 &1 day 7:34:22 &0:09:49 &19 days 6:45:31\\
50000 &2000 &4800 &3576 &0:01:11 &1 day 13:51:15 &0:34:30 &77 days 5:23:50\\
50000 &2000 &5600 &4172 &0:01:29 &1 day 20:20:50 &0:34:38 &90 day 19:10:12\\
50000 &2000 &6400 &4768 &0:01:01 &2 days 2:55:20 &0:35:31 &103 days 19:09:22\\
50000 &2000 &7200 &5364 &0:01:14 &2 days 9:38:02 &0:10:26 &34 days 20:11:28\\
\hline
\label{tab:heteroLog}
\end{tabular}
\end{center}

\paragraph{Lasso with heterogeneous data}

\begin{center}
\begin{tabular}{llllllll}
\hline
N &F &Cores &Space(GB) &TWalltime &TCompute &CWalltime &CCompute\\
\hline
50000 &200 &800 &59 &0:00:12 &0:01:45 &0:00:37 &0:04:55\\
50000 &200 &1600 &119 &0:00:02 &0:03:31 &0:00:47 &0:10:56\\
50000 &200 &2400 &178 &0:00:02 &0:05:14 &0:01:14 &0:17:50\\
50000 &200 &3200 &238 &0:00:00 &0:07:00 &0:01:22 &0:25:24\\
50000 &200 &4000 &298 &0:00:04 &0:09:00 &0:01:36 &0:33:49\\
50000 &200 &4800 &357 &0:00:11 &0:10:25 &0:01:57 &0:43:29\\
50000 &200 &5600 &417 &0:00:10 &0:12:09 &0:02:07 &0:55:47\\
50000 &200 &6400 &476 &0:00:07 &0:13:48 &0:02:19 &1:04:51\\
50000 &200 &7200 &536 &0:00:09 &0:15:31 &0:02:39 &1:19:22\\
\hline
50000 &1000 &800 &298 &0:00:04 &0:33:28 &0:05:20 &2:58:02\\
50000 &1000 &1600 &596 &0:00:18 &1:06:33 &0:06:23 &6:00:37\\
50000 &1000 &2400 &894 &0:00:25 &1:39:50 &0:08:28 &9:04:25\\
50000 &1000 &3200 &1192 &0:00:09 &2:12:14 &0:08:34 &12:07:04\\
50000 &1000 &4000 &1490 &0:00:08 &2:46:27 &0:09:52 &15:13:18\\
50000 &1000 &4800 &1788 &0:00:21 &3:24:38 &0:13:28 &18:11:34\\
50000 &1000 &5600 &2086 &0:00:10 &3:50:29 &0:14:55 &21:25:49\\
50000 &1000 &6400 &2384 &0:00:06 &4:26:31 &0:16:11 &1 day 0:27:56\\
50000 &1000 &7200 &2682 &0:00:11 &4:56:57 &0:17:11 &1 day 3:34:19\\
\hline
\label{tab:heteroLasso}
\end{tabular}
\end{center}

\paragraph{SVM with homogeneous data}

\begin{center}
\begin{tabular}{llllllll}
\hline
N &F &Cores &Space(GB) &TWalltime &TCompute &CWalltime &CCompute\\
\hline
50000 &20 &48 &0 &0:00:01 &0:00:46 &0:02:45 &2:01:12\\
50000 &20 &96 &0 &0:00:01 &0:01:32 &0:02:47 &4:03:05\\
50000 &20 &144 &1 &0:00:02 &0:02:19 &0:02:49 &5:58:08\\
50000 &20 &192 &1 &0:00:02 &0:03:06 &0:02:45 &7:56:14\\
50000 &20 &240 &1 &0:00:02 &0:03:53 &0:02:51 &9:54:05\\
\hline
50000 &50 &48 &0 &0:00:03 &0:01:23 &0:05:14 &3:44:06\\
50000 &50 &96 &1 &0:00:03 &0:02:47 &0:05:19 &7:26:30\\
50000 &50 &144 &2 &0:00:03 &0:04:11 &0:05:25 &11:07:51\\
50000 &50 &192 &3 &0:00:07 &0:05:38 &0:05:25 &14:54:03\\
50000 &50 &240 &4 &0:00:03 &0:07:00 &0:05:25 &18:26:10\\
\hline
50000 &100 &48 &1 &0:00:05 &0:02:20 &0:09:28 &6:25:55\\
50000 &100 &96 &3 &0:00:05 &0:04:40 &0:09:56 &12:49:20\\
50000 &100 &144 &5 &0:00:05 &0:07:04 &0:09:45 &19:09:22\\
50000 &100 &192 &7 &0:00:06 &0:09:25 &0:09:53 &1 day 1:27:47\\
50000 &100 &240 &8 &0:00:05 &0:11:46 &0:10:06 &1 day 8:08:18\\
\hline
\label{tab:homoSVM}
\end{tabular}
\end{center}

\paragraph{Star data}

\begin{center}
\begin{tabular}{llllllll}
\hline
Cores &TWalltime &TCompute &CWalltime &CCompute\\
\hline
2500 &0:01:06 &11:35:25 &0:24:39 &31 days 19:59:13\\
3000 &0:00:49 &12:10:33 &0:21:43 &32 days 2:44:11\\
3500 &0:00:50 &12:17:27 &0:17:01 &30 days 7:56:19\\
4000 &0:00:45 &12:38:24 &0:29:53 &40 days 13:38:19\\
\hline
\end{tabular}
\end{center}

%

\end{document}